\newtheorem{theorem}{Theorem}
\newtheorem{lemma}{Lemma}
\newtheorem{corollary}{Corollary}
\def\BibTeX{{\rm B\kern-.05em{\sc i\kern-.025em b}\kern-.08em
    T\kern-.1667em\lower.7ex\hbox{E}\kern-.125emX}}
\begin{document}

\title{On the Credibility of Information Flows in Real-time Wireless Networks\\
}

\author{\IEEEauthorblockN{Daojing Guo}
\IEEEauthorblockA{\textit{Electrical and Computer Engineering Department} \\
\textit{Texas A\&M University}\\
College Station, United States \\
daojing\_guo@tamu.edu}
\and
\IEEEauthorblockN{I-Hong Hou}
\IEEEauthorblockA{\textit{Electrical and Computer Engineering Department} \\
\textit{Texas A\&M University}\\
College Station, United States \\
ihou@tamu.edu}
}

\maketitle
\begin{abstract}\label{section:abstract}
This paper considers a wireless network where multiple flows are delivering status updates about their respective information sources. An end user aims to make accurate real-time estimations about the status of each information source using its received packets. As the accuracy of estimation is most impacted by events in the recent past, we propose to measure the \emph{credibility} of an information flow by the number of timely deliveries in a window of the recent past, and say that a flow suffers from a \emph{loss-of-credibility} (LoC) if this number is insufficient for the end user to make an accurate estimation.

We then study the problem of minimizing the system-wide LoC in wireless networks where each flow has different requirement and link quality. We show that the problem of minimizing the system-wide LoC requires the control of temporal variance of timely deliveries for each flow. This feature makes our problem significantly different from other optimization problems that only involves the average of control variables. Surprisingly, we show that there exists a simple online scheduling algorithm that is near-optimal. Simulation results show that our proposed algorithm is significantly better than other state-of-the-art policies.

\end{abstract}
\section{Introduction}

Many emerging applications, such as industrial Internet of Things (IoT) and virtual reality (VR), require the real-time delivery of information. From an end user's perspective, the performance of such applications are determined by their ability to accurately estimate the real-time status of their respective information sources, such as the temperature of a machine in industrial IoT or the location of a monster in a VR game. However, most existing network performance metrics, ranging from traditional quality-of-service (QoS) metrics such as throughput, delay, and jitter, to emerging ones like timely-throughput and age-of-information, fail to directly capture the accuracy of the users' estimation. Therefore, network algorithms aiming at optimizing these network performance metrics may result in poor performance for these emerging applications.

To address the need for these emerging applications, we introduce the concept of \emph{credibility} of information flows, where an information flow is considered to be credible if its user can make an accurate estimate of the current status using its delivered packets. Our model for credibility is based on two important features of estimation algorithms: First, information in the recent past is much more useful than that in the distant past for making an accurate estimate. Second, most estimation algorithms, even simple ones like linear extrapolation, require multiple data points in the recent past. 

Based on these observations, we propose a model to capture the credibility of information flows in real-time wireless networks. In this model, we consider that each information source, such as sensors generating readings or VR servers generating video frames, generates real-time information periodically. Stale information is dropped in favor of the transmission of new information. The credibility of an information flow only depends on the number of packets that are delivered on time in a window of recent past. If the number of timely deliveries in this window of recent past is below a user-specified threshold, then the estimation becomes inaccurate, and the information flow suffers from a loss-of-credibility (LoC). Our goal is to minimize the system-wide LoC in a wireless network with multiple flows, each with different threshold and channel reliability.

Using Brownian approximation and martingale theory, we show that the problem of minimizing the system-wide LoC is equivalent to an optimization problem that involves two sets of constraints: One set of constraints are related to the average of timely deliveries of each flow, and another set of constraints are related to the variance of timely deliveries. The existence of constraints about the variance of timely deliveries makes this problem significantly different from other network utility maximization (NUM) problems that only involve constraints about the average of variables, and hence cannot be solved by most existing techniques for NUM problems.

We propose a simple online scheduling algorithm for this problem. We analytically prove that the timely deliveries under our scheduling algorithm satisfy both the constraints on the average and those on the variance in the optimization problem. We also analytically prove that our algorithm is near-optimal for the optimization problem in the sense that its performance can be made arbitrarily close to a theoretical bound.

We further evaluate the performance of our algorithm by comparing it against two other state-of-the-art policies, one of them is provably optimal in terms of timely-throughput, and the other achieves an approximation bound in terms of age-of-information. Simulation results show that our policy achieves much smaller LoC than these two policies. This result further highlights that existing network performance metrics may be misleading in capturing the credibility of information flow.

The rest of this paper is as following order:  Section \ref{section:model} introduces our model for credibility in real-time wireless networks. Section \ref{section:formulation} shows that the problem of minimizing LoC is equivalent to an optimization problem. Section \ref{section: policy} introduces our online scheduling algorithm. Section \ref{section:performance} analyzes the performance of our scheduling algorithm and shows that it is near-optimal for the optimization problem. Section \ref{sec:simulation} presents our simulation results. Section \ref{section: related work} reviews some related work. Finally, Section \ref{section: conclusion} concludes this paper. 
\section{System Model}\label{section:model}

We extended the model in \cite{HOU2016ShortTermPerformance}, which focuses on the short-term performance for wireless networks with homogeneous links, to address the credibility of information flows in real-time wireless network where different wireless links can have different channel qualities. 

We consider a real-time wireless network that serves $\mathcal{N}$ clients. Time is slotted, and the duration of one time slot is the amount of time needed by a whole transmissions, including all overheads such as the transmission of poll packet or ACK. Hence, the AP can transmit to at most one client at each time slot, and it has the instantaneous feedback information on whether the transmission is successful. We consider that wireless transmissions are subject to effects of shadowing, multi-path, fading, interference, etc., and different clients experience different channel qualities as they are located at different positions. Hence, we assume that each transmission for client $i$ is successful with probability $p_i$. 

We consider that each client is associated with a real-time information flow, and use flow $i$ to indicate the flow associated with client $i$. Specifically, we assume that each real-time flow generates one packet periodically every $\tau$ slots, that is at time slots 1, $\tau+1$, $2\tau+1$, $\dots$. Each packet has a stringent delay bound of $\tau$ slots, and is removed from the system if it cannot be delivered before its delay bound. In other words, each packet in a real-time flow is only valid for transmission until the next packet arrives. We thereby say that $\tau$ time slots from an \emph{interval}. Packets arrive at the system at the beginning of each interval, and have deadlines at the end of the interval. 

We note that this model for real-time flows applies to many emerging wireless applications. For example, consider multi-user virtual reality (VR) or augmented reality (AR), where an AP streams VR/AR contents to multiple VR/AR headsets. All headsets play VR/AR contents at the same frame rates, and therefore they generate traffic at the same frequency. Further, as the AP should always transmit the newest VR/AR content to a headset, packets that fail to be delivered on time should be removed and replaced by newer packets. Likewise, one can also consider industrial Internet of Things (IoT), where an AP polls measurements from multiple sensors monitoring different locations. Sensors have the same sampling frequency and therefore generate traffic at the same frequency. Also, stale measurements should be dropped when a new measurement is generated. 

An important feature of real-time application such as VR/AR and industrial IoT is that each flow can typically tolerate a small amount of sporadic packet losses, but is very sensitive to a burst of packet losses. For example, in industrial IoT, a controller can use various estimation techniques to estimate the value of a lost sensor reading. However, the accuracy of the estimate significantly degrades if there is a burst a packet losses. Further, it is obvious that the accuracy of the estimate only depends on the deliveries of recent sensor readings, and readings in the distant past have negligible effect on the estimation accuracy. We thereby say that an information flow is \emph{credible} if its delivered packets enable the controller to make an accurate estimation.

The goal of this paper is to define and optimize the \emph{credibility} of an information flow that directly reflects the accuracy of the resulting estimate by the controller. To capture the aforementioned feature of real-time applications, we assume that the credibility of a real-time flow at a given point of time only depends on the packet deliveries in the \emph{window} of past $T$ intervals. Specifically, let $X_i(t)$ be the total number of timely-deliveries for flow $i$ in the first $t$ intervals. We then have $X_i(t)-X_i(t-1)=1$ if a packet is delivered to client $i$ in interval $t$, and $X_i(t)-X_i(t-1)=0$ if not. The number of timely-deliveries in the window of the last $T$ intervals can then be represented as $X_i(t)-X_i(t-T)$, and we assume that the credibility of flow $i$ at the end of interval $t$ only depends on the value of $X_i(t)-X_i(t-T)$.

We assume that, to make an accurate estimate, each client $i$ requires that there are at least $q_iT$ packets being delivered in the past $T$ intervals, i.e., $X_i(t)-X_i(t-T)\geq q_iT$. The value of $q_i$ depends on the context of the information flow. For example, a sensor monitoring a high-frequency signal requires a larger $q_i$ than one that is monitoring a low-frequency signal.

Due to the unreliable nature of wireless transmissions, it is obvious that it is not possible to satisfy the requirements of all clients at all time. When the AP fails to deliver $q_iT$ packets for a client $i$, then the estimation of current state of client $i$ becomes less accurate, and therefore we say that flow $i$ \emph{loses credibility}. 

We now formally define the measure of \emph{Loss-of-Credibility} (LoC). Suppose $X_i(t)-X_i(t-T)<q_iT$ for some $i$ and $t$. Recall that every transmission for client $i$ is successful with probability $p_i$. Hence, the AP would have needed to, on average, schedule $\frac{q_iT-(X_i(t)-X_i(t-T))}{p_i}$ more transmissions for client $i$ to make $X_i(t)-X_i(t-T)=q_iT$ and flow $i$ credible. We therefore define \emph{unbiased shortage} of client i at the end of interval $t$ as $\theta_i(t):=\max\{\frac{q_iT-(X_i(t)-X_i(t-T))}{p_i},0\}$. At the end of each interval $t$, each client $i$ suffers from a LoC of $C(\theta_i(t))$ based on its unbiased shortage, where $C(\cdot)$ is a strictly increasing, strictly convex, and differentiable function with $C(0)=0$ and $C'(0)=0$.

This paper aims to evaluate and minimize the long-time average total LoC of all clients in the system, which can be written as $\lim\limits_{\mathbb{T} \to \infty}\frac{{\sum \limits_{t = T+1}^{\mathbb{T}+T}{\sum\limits_{i = 1}^{N}{C(\theta_i(t))}}}}{\mathbb{T}} = \\ \lim\limits_{\mathbb{T} \to \infty}\frac{{\sum \limits_{t = T+1}^{\mathbb{T}+T}{\sum\limits_{i = 1}^{N}{C(\frac{q_i T}{p_i}-\frac{X_i(t)-X_i(t-T)}{p_i})}}}}{\mathbb{T}}$. 

\section{The Formulation of the Optimization Problem} \label{section:formulation}

In this section, we derive some fundamental properties about the minimization of total LoC. We then formulate an optimization problem.

Recall that $X_i(t)$ is the total number of timely-deliveries for client $i$ in the first $t$ intervals. Obviously, $\{X_i(1), X_i(2),\dots\}$ is a sequence of random variables whose distribution is determined by the employed packet scheduling policy. 
For simplicity, we only focus on ergodic scheduling policies in this paper. Thus, the random variable $\{X_i(t)-X_i(t-T)\}$ can be modeled by a positive recurrent Markov chain. By the law of large numbers, we can define $\bar{X}_i := \lim_{t \to \infty}{\frac{X_i(t)}{t}}$. Further, following the central limit theorem of Markov chains \cite{jones2004MarkovChainCLT}, $\hat{X}_i:=\lim_{\mathbb{T} \to \infty}{\frac{X_i(\mathbb{T})-\mathbb{T}\bar{X}_i}{\sqrt{\mathbb{T}}}}$ is a Guassian random variable with mean 0 and some finite variance, which we denote by $\sigma_i^2$, with $\sigma_i \geq 0$. Hence, we can approximate $X_i(t)-X_i(t-T)$ as a Gaussian random variable with mean $T\bar{X}_i$ and variance $T\sigma_i^2$. 
Let $\Phi(x)$ represents the cumulative distribution function of a random variable under standard normal distribution, then, under this approximation, we have that the CDF of $(X_i(t) - X_i(t-T))-T\bar{X}_i)$
is $\Phi(\frac{x}{\sqrt{\sigma_i^2 T}})$. 

The long-term average total LoC can now be re-written as below: 
\begin{align}   
    &\lim\limits_{\mathbb{T} \to \infty}\frac{{\sum \limits_{t = T+ 1}^{\mathbb{T}+T}{\sum\limits_{i = 1}^{N}{C(\frac{q_i T}{p_i}-\frac{X_i(t)-X_i(t-T)}{p_i})}}}}{\mathbb{T}}\notag \\
   =&\lim\limits_{\mathbb{T} \to \infty}{\sum _{i=1}^N{E[C(\frac{q_i T}{p_i}-\frac{X_i(\mathbb{T})-X_i(\mathbb{T} -T)}{p_i})]}} \notag \\
   \approx& \lim\limits_{\mathbb{T} \to \infty}\sum _{i=1}^N{E[C(\frac{q_i T}{p_i}-\frac{\sqrt{T}\hat{X}_i(T)+T\bar{X}_i}{p_i})]} \notag \\
    \approx &\sum_{i=1}^N{\int\limits_z{C(\sqrt{\frac{\sigma_i^2  T}{p_i^2}}z-\frac{(\bar{X}_i-q_i)T}{p_i})}}d\Phi(z).  \label{equation:objective}
\end{align}

Eq. (\ref{equation:objective}) has two sets of control variables: $[\bar{X}_i]$ and $[\sigma_i]$. Below, we derive the corresponding constraints of these two sets of variables.

We first derive the constraints on $[\bar{X}_i]$. Previous work \cite{hou2009theory} has shown that, under any work-conserving policy\footnote{A scheduling policy is called work-conserving if it always schedules a transmission when there is at least one packet available for transmission.}, we have, for all $t$,
\begin{equation}
E[\sum_{i=1}^N{\frac{{X}_i(t)-X_i(t-1)}{p_i}}]= \tau-I_{\{1,2,\dots, N\}}, \label{eq:x martingale}    
\end{equation}
and
\begin{equation}
E[\sum_{i\in S}{\frac{{X}_i(t)-X_i(t-1)}{p_i}}]\leq \tau-I_{S},    
\end{equation}
for any subset $S \subseteq \{1,2,\dots N\}$, where $I_s$ is called the \emph{idle time} and has been shown to be the same under all work-conversing policies. Therefore, we have
\begin{align}
\sum_{i=1}^N{\frac{\bar{X}_i}{p_i}}= \tau-I_{\{1,2,\dots, N\}}, \label{equation:constraint x}
\end{align}
and

\begin{equation}
\sum_{i \in S}{\frac{\bar{X}_i}{p_i}} \leq \tau-I_{S}, \forall S \subseteq \{1,2,\dots N\}. \label{equation:loose constraint x}
\end{equation} 
We further assume that, similar to the total resource pooling condition, the constraint $\sum_{i\in S}{\frac{\bar{X}_i}{p_i}}\leq \tau-I_S$ is not tight and can be ignored when $S$ is not $\{1,2,\dots N\}$. 

Now, we derive the constraint of $[\sigma_i]$. By (\ref{eq:x martingale}), the sequence of random variables
$\{\sum_{i=1}^N{\frac{{X}_i(t)}{p_i}}-t(\tau-I_{\{1,2,\dots, N\}})|t = 1,2,\dots\}$ is a martingale. By the martingale central limit theorem \cite{brown1971martingale}, 
$
\hat{X}_{TOT}:=\lim_{\mathbb{T} \to \infty}{\frac{\sum_{i=1}^N\frac{X_i(\mathbb{T})}{p_i}-\mathbb{T}(\tau-I_{\{1,2,\dots N\}})}{\sqrt{\mathbb{T}}}}=\lim_{\mathbb{T} \to \infty}{\frac{\sum_{i=1}^N\frac{X_i(\mathbb{T})}{p_i}-\mathbb{T}(\sum_{i=1}^N\frac{\bar{X}_i}{p_i})}{\sqrt{\mathbb{T}}}} $
is a Gaussian random variable with mean $0$, and its variance is 
\begin{align}
    \sigma_{TOT}^2:=&\lim_{\mathbb{T}\to\infty}\frac{1}{\mathbb{T}}[\sum_{t=1}^\mathbb{T}(\sum_{i=1}^N\frac{X_i(t)-X_i(t-1)}{p_i})^2]\notag \\ &-(\tau-I_{\{1,2,\dots N\}})^2,     
\end{align}whose value depends on the employed scheduling policy.

Recall that $\hat{X}_i:=\lim_{\mathbb{T} \to \infty}{\frac{X_i(\mathbb{T})-\mathbb{T}\bar{X}_i}{\sqrt{\mathbb{T}}}}$ is a Gaussian random variable with variance $\sigma_i^2$. Hence, we have $\hat{X}_{TOT}=\sum_{i=1}^N\frac{\hat{X}_i}{p_i}$, and
the variance of $\frac{\hat{X}_i}{p_i}$ is $(\frac{\sigma_i}{p_i})^2$. By Cauchy-Schwarz Inequality, we have:
\begin{align}
    &\big(\sum_{i=1}^N{\frac{\sigma_i}{p_i}}\big)^2 =\big(\sum\limits_{i=1}^N\sqrt{{Var(\frac{\hat{X}_i(t)}{p_i})}}\big)^2 \notag \\
    =  &\sum\limits_{i=1}^N{Var(\frac{\hat{X}_i(t)}{p_i})}\notag \\ &\quad +2\sum\limits_{l=1}^N\sum\limits_{m=l+1}^N{\sqrt{Var(\frac{\hat{X}_l(t)}{p_l})Var(\frac{\hat{X}_m(t)}{p_m})}}\notag \\
    \geq &\sum\limits_{i=1}^N{Var(\frac{\hat{X}_i(t)}{p_i})}+2\sum\limits_{l=1}^N\sum\limits_{m=l+1}^N{Cov(\frac{\hat{X}_l(t)}{p_l},\frac{\hat{X}_m(t)}{p_m})} \notag \\
    = &Var(\sum_{i=1}^N{\frac{\hat{X}_i(t)}{p_i}})=\sigma_{TOT}^2 \label{equation: constraint sigma},
\end{align}
where $Var(X)$ denotes the variance of $X$ and $Cov(X,Y)$ denotes the covariance. 

Although the value of $\sigma_{TOT}$ may be different for different scheduling policies, we first consider the special case of minimizing the total LoC when $\sigma_{TOT}$ is given and fixed. By (\ref{equation:objective}), (\ref{equation:constraint x}), and (\ref{equation: constraint sigma}), the optimization problem can be written as:
\begin{align}
    Min \quad &L=\sum_{i=1}^N{\int\limits_z{C(\sqrt{\frac{\sigma_i^2 T}{p_i^2}}z-\frac{(\bar{X}_i-q_i)T}{p_i})}}d\Phi(z) \label{equation: obj function} \\
    s.t. &\sum_{i=1}^N{\frac{\bar{X}_i}{p_i}} = \tau-I_{\{1,2,\dots N\}} \label{equation:first constriant} \\
    &\sum_{i=1}^N{\frac{\sigma_i}{p_i}} \geq \sigma_{TOT}. \label{equation:second constraint}
\end{align}
\begin{theorem} \label{theorem: opt solution}
Let $[\bar{X_i}^*]$ and $[\sigma_i^*]$ be the optimal solution to (\ref{equation: obj function}) -- (\ref{equation:second constraint}). Then $\bar{X_i}^* = (\frac{\tau - I_{\{1,2,\dots N\}}}{N}-\sum_{j=1}^N{\frac{q_j}{N p_j }}+\frac{q_i}{p_i})p_i$, and $\sigma_i^* = \frac{\sigma_{TOT}}{N}p_i$, for all $1\leq i\leq N$. 
\end{theorem}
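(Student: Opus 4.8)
\emph{Reformulation.} The plan is to reparametrize each flow so that the objective decouples into a sum of identical terms and the two constraints become a single linear equality and a single linear inequality, then to use convexity to collapse to a symmetric solution, and finally to solve the resulting one-dimensional problem --- in which, it will turn out, the variance constraint is active. Concretely, for each $i$ I set $\alpha_i := \sigma_i/p_i\,(\ge 0)$ and $\beta_i := (\bar{X}_i - q_i)/p_i$; since $\sqrt{\sigma_i^2 T/p_i^2}\,z - (\bar{X}_i - q_i)T/p_i = \sqrt{T}\,\alpha_i z - T\beta_i$, the objective (\ref{equation: obj function}) becomes $L=\sum_{i=1}^N h(\alpha_i,\beta_i)$ with $h(\alpha,\beta):=\int_z C(\sqrt{T}\,\alpha z - T\beta)\,d\Phi(z) = E\big[C(\sqrt{T}\,\alpha Z - T\beta)\big]$, $Z$ a standard normal. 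Constraint (\ref{equation:first constriant}) becomes $\sum_i\beta_i = B$ with $B := \tau - I_{\{1,2,\dots,N\}} - \sum_j q_j/p_j$, and (\ref{equation:second constraint}) becomes $\sum_i\alpha_i \ge \sigma_{TOT}$; together with $\alpha_i\ge 0$ the feasible region is nonempty and convex. In these coordinates the asserted optimum is just $\alpha_i^* = \sigma_{TOT}/N$, $\beta_i^* = B/N$ for every $i$, and undoing the substitution via $\bar{X}_i = p_i\beta_i + q_i$, $\sigma_i = p_i\alpha_i$ reproduces the formulas in the statement; so it suffices to prove this symmetric point is the unique minimizer.

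\emph{Convexity and symmetrization.} Next I would establish that $h$ is strictly jointly convex: for fixed $z$, $(\alpha,\beta)\mapsto \sqrt{T}\,\alpha z - T\beta$ is affine, so convexity of $C$ makes the integrand convex and hence $h$ jointly convex; and since two such affine forms at distinct $(\alpha,\beta)\ne(\alpha',\beta')$ agree for at most one value of $z$ (a $\Phi$-null set), strict convexity of $C$ upgrades this to strict convexity of $h$. Consequently $L$ is strictly convex (each summand being strictly convex in its own pair of variables), so it has at most one minimizer. I then symmetrize: for any feasible $(\alpha_i,\beta_i)_i$, the point whose every block equals $(\bar\alpha,\bar\beta):=\big(\tfrac1N\sum_i\alpha_i,\ \tfrac1N\sum_i\beta_i\big)$ is again feasible ($N\bar\alpha = \sum_i\alpha_i \ge \sigma_{TOT}$, $N\bar\beta = \sum_i\beta_i = B$, $\bar\alpha\ge 0$), and Jensen's inequality gives $\sum_i h(\bar\alpha,\bar\beta) = N\,h\big(\tfrac1N\sum_i(\alpha_i,\beta_i)\big) \le \sum_i h(\alpha_i,\beta_i)$. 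Hence the unique minimizer is symmetric; for a symmetric point, feasibility forces $\beta_i = B/N$ for all $i$, leaving only the scalar problem of minimizing $g(\alpha):=h(\alpha,B/N)$ over $\alpha\ge\sigma_{TOT}/N\,(\ge 0)$.

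\emph{The scalar problem.} I claim $g$ is strictly increasing on $[0,\infty)$, so its minimum over $[\sigma_{TOT}/N,\infty)$ is attained only at $\alpha = \sigma_{TOT}/N$, which finishes the argument. To see the claim, fix $0\le\alpha_1<\alpha_2$ and put $\beta := B/N$: since $\sqrt{T}\,\alpha_2 Z - T\beta$ has the same Gaussian law as $(\sqrt{T}\,\alpha_1 Z - T\beta) + \sqrt{T(\alpha_2^2-\alpha_1^2)}\,Z'$ for an independent standard normal $Z'$, conditioning on the first summand and applying Jensen to the convex $C$ gives
\[
g(\alpha_2) = E\big[C\big((\sqrt{T}\,\alpha_1 Z - T\beta) + \sqrt{T(\alpha_2^2-\alpha_1^2)}\,Z'\big)\big]\ \ge\ E\big[C(\sqrt{T}\,\alpha_1 Z - T\beta)\big] = g(\alpha_1),
\]
with strict inequality because $C$ is strictly convex and the added noise is nondegenerate; equivalently, $\sqrt{T}\,\alpha_2 Z - T\beta$ is a nontrivial mean-preserving spread of $\sqrt{T}\,\alpha_1 Z - T\beta$. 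Combining this with the previous step, the unique optimum in the new coordinates is $\alpha_i^* = \sigma_{TOT}/N$, $\beta_i^* = B/N$, which translates into the asserted $[\bar{X}_i^*]$ and $[\sigma_i^*]$.

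\emph{Main obstacle.} I expect the two load-bearing steps to be the strict joint convexity of $h$ and the monotonicity of $h$ in the dispersion variable $\alpha$; the latter is the conceptual crux, as it is precisely what forces the variance constraint (\ref{equation:second constraint}) to be active at the optimum --- intuitively, with the mean fixed, extra variability only increases a convex loss. The technical points I would need to be careful about are the finiteness of the integral defining $h$ (implicit already in (\ref{equation:objective})) and that $C$, extended to negative arguments as already done there, stays convex; the mean-preserving-spread argument is chosen deliberately so that no second derivative of $C$ is invoked.
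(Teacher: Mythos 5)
Your proof is correct, and its core step---symmetrizing via Jensen's inequality applied to the convex integrand, with equality forcing $\sigma_i/p_i$ and $(\bar{X}_i-q_i)/p_i$ to be equal across flows---is exactly the paper's argument, just expressed in the cleaner coordinates $\alpha_i=\sigma_i/p_i$, $\beta_i=(\bar{X}_i-q_i)/p_i$. Where you genuinely go beyond the paper is the third step: the paper jumps from the symmetrized lower bound directly to $\sigma_i^*=\frac{\sigma_{TOT}}{N}p_i$ ``by (\ref{equation:first constriant}) and (\ref{equation:second constraint})'', silently assuming the inequality constraint (\ref{equation:second constraint}) is active at the optimum, even though the symmetrized bound still depends on $\sum_i\sigma_i/p_i$, which is only constrained from below. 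Your mean-preserving-spread argument (a higher-variance centered Gaussian equals a lower-variance one plus independent noise, so conditional Jensen makes the convex loss strictly larger) is precisely the missing monotonicity lemma that justifies tightness, and it avoids any smoothness assumptions on $C$ beyond convexity. Your attention to the extension of $C$ to negative arguments and to strictness of the convexity (for uniqueness) likewise addresses points the paper leaves implicit. In short: same approach, but your write-up closes a real gap in the published proof rather than introducing one.
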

\begin{proof}
Since $C(\cdot)$ is a convex function, we have: 
\begin{align*}
    L &= \sum_{i=1}^N{\int\limits_z{C(\sqrt{\frac{\sigma_i^2 T}{p_i^2}}z-\frac{(\bar{X}_i-q_i)T}{p_i})}d\Phi(z)}\notag \\ 
    &\geq N\int\limits_z{C(\frac{1}{N}\sum_{i=1}^N{(\sqrt{\frac{\sigma_i^2 T}{p_i^2}}z-\frac{(\bar{X}_i-q_i)T}{p_i}}))}d\Phi(z),
\end{align*}
with equality occurs when $\frac{\bar{X}_i^*}{p_i}-\frac{q_i}{p_i} = \frac{\bar{X}_j^*}{p_j}-\frac{q_j}{p_j}$ and $\frac{\sigma_i^*}{p_i} = \frac{\sigma_j^*}{p_j}$ for any $i, j \in \{1,2,\dots N\}$. By (\ref{equation:first constriant}) and (\ref{equation:second constraint}), we have $\bar{X_i}^* = (\frac{\tau - I_{\{1,2,\dots N\}}}{N}-\sum_{i =1}^N{\frac{q_i}{N p_i }}+\frac{q_i}{p_i})p_i$ and $\sigma_i^* = \frac{\sigma_{TOT}}{N}p_i$. 
\end{proof}

Theorem \ref{theorem: opt solution} establishes the optimal $\{\bar{X}_i\}$ and $\{\sigma_i\}$ that minimizes the total LoC when $\sigma_{TOT}$ is given and fixed. Obviously, smaller $\sigma_{TOT}$ leads to smaller total LoC. Therefore, we seek to solve the optimization problem below, which aims to minimizing $\sigma_{TOT}$ while satisfying the results of Theorem~\ref{theorem: opt solution}: 
\begin{align}
    Min \quad&\sigma_{TOT}^2:=\lim_{\mathbb{T}\to\infty}\frac{1}{\mathbb{T}}[\sum_{t=1}^\mathbb{T}(\sum_{i=1}^N\frac{X_i(t)-X_i(t-1)}{p_i})^2]\notag \\&-(\tau-I_{\{1,2,\dots N\}})^2 \label{equation:second objective}\\
    s.t. &\bar{X}_i = \bar{X}_i^*, \forall 1\leq i\leq N \label{equation:x opt solution}\\
    &\sigma_i = \frac{\sigma_{TOT}}{N}p_i, \quad \forall 1\leq i\leq N,\label{equation: sigma opt solution}
\end{align} 
where $\bar{X}_i^*:= (\frac{\tau-I_{\{1,2,\dots N\}}}{N}-\sum_{j=1}^N{\frac{q_j}{p_j N}}+\frac{q_i}{p_i})p_i$.

We note that the problem (\ref{equation:second objective}) -- (\ref{equation: sigma opt solution}) involves both a constraint on the average of $X_i(t)$ (\ref{equation:x opt solution}) and a constraint on the variance of $X_i(t)$ (\ref{equation: sigma opt solution}) for each $i$. Most existing studies on network utility maximization (NUM) problem only addresses constraints on the average of decision variables, and therefore cannot be applied to solve (\ref{equation:second objective}) -- (\ref{equation: sigma opt solution}). In fact, no stationary randomized policies can optimally solve (\ref{equation:second objective}) -- (\ref{equation: sigma opt solution}). In the following sections, we will establish the surprising result that there exists a simple online scheduling policy that is near-optimal for the problem (\ref{equation:second objective}) -- (\ref{equation: sigma opt solution}). 
\section{An online scheduling policy} \label{section: policy}


In this section, we propose a simple online scheduling policy for the problem (\ref{equation:second objective}) -- (\ref{equation: sigma opt solution}). We first provide a brief outline of the construction of our algorithm. First, we remove the constraint on variance (\ref{equation: sigma opt solution}) and focus on the following optimization problem:

\begin{align}
    Min \quad &\lim_{\mathbb{T}\to\infty}\frac{1}{\mathbb{T}}[\sum_{t=1}^\mathbb{T}(\sum_{i=1}^N\frac{X_i(t)-X_i(t-1)}{p_i})^2]\notag \\&\quad -(\tau-I_{\{1,2,\dots N\}})^2 \label{equation:third objective}\\
    s.t. &\bar{X}_i = \bar{X}_i^*,  \forall 1\leq i\leq N. \label{equation:third constraint for x}
\end{align}
Obviously, this optimization problem is a lower bound to the original problem (\ref{equation:second objective}) -- (\ref{equation: sigma opt solution}). It is also a standard NUM problem that only involves a constraint on the average of $X_i(t)$ for each $i$. We can therefore derive a near-optimal online scheduling algorithm using the Drift-Plus-Penalty approach \cite{neely2010stochastic}. We further demonstrate the surprising result that, due to the specific choice of our Lyapunov function, our algorithm also satisfies the constraint on variance (\ref{equation: sigma opt solution}). Therefore, our algorithm is near-optimal to the original problem (\ref{equation:second objective}) -- (\ref{equation: sigma opt solution}).

We now introduce some notations that are necessary for the design and analysis of our algorithm. Let $d_i(t):=\frac{\bar{X}_i^* t}{p_i}-\frac{X_i(t)}{p_i}$ be the \emph{deficit} of client $i$ in interval $t$. Obviously, we have $\bar{X}_i := \lim_{t \to \infty}{\frac{X_i(t)}{t}}=\bar{X}_i^*$ if and only if $\lim_{t \to \infty}\frac{d_i(t)}{t}=0$. We also define $\Delta d_i(t) := d_i(t+1)-d_i(t)=\frac{\bar{X}_i^* }{p_i}-\frac{X_i(t+1)-X_i(t)}{p_i}$ and $D(t) := \frac{\sum_{i=1}^N d_i(t)}{N}$.


We consider the Lyapunov function $L(t) = \frac{1}{2}\sum_{i =1}^N[d_i(t)-D(t)]^2$. The drift of the Laypunov function is $\Delta L(t) := E[L(t+1)-L(t)|[d_i(t)] ]$. 


Given $[d_i(t)]$, we have, under any scheduling policy,
\begin{align}
    &\Delta {L(t)}= E[L(t+1)-L(t)]\notag\\
    =& E\big[\frac{1}{2}\sum_{i=1}^N\big( d_i(t+1)-D(t+1)\big)^2 \notag \\ &\quad - \frac{1}{2}\sum_{i =1}^N\big(d_i(t)-D(t)\big)^2\big] \notag\\
    =& E\big[\frac{1}{2}\sum_{i =1}^N\big(d_i(t)-D(t)+\Delta d_i(t)-\frac{\sum_{i=1}^N\Delta d_i(t)}{N}\big)^2\big]\notag \\ & - E\big[\frac{1}{2}\sum_{i =1}^N\big(d_i(t)-D(t)\big)^2\big] \notag\\
    =& E\big[\frac{1}{2}\sum_{i=1}^N\big(\Delta d_i(t) -\frac{\sum_{i=1}^N\Delta d_i(t)}{N}\big)^2\big]\notag \\ & +\sum_{i=1}^NE\big[\Delta d_i(t)\big]\big(d_i(t)-D(t)\big)\notag \\ & - E\big[\frac{\sum_{i=1}^N{\Delta d_i(t)}}{N}\big]\sum_{i =1}^N\big(d_i(t)-D(t)\big)\notag\\
    \leq& \beta + \sum_{i=1}^NE\big[\Delta d_i(t)\big]\big(d_i(t)-D(t)\big), \label{equation:drift expand}
\end{align} 
where $\beta$ is a bounded positive number. The last inequality holds since $\Delta d_i(t)$ is bounded by $\frac{\bar{X}_i^*-1}{p_i}\leq \Delta d_i(t)\leq \frac{\bar{X}_i^*}{p_i}$ and $\sum_{i =1}^N d_i(t)=ND(t)$.

Our scheduling algorithm is based on the Drift-Plus-Penalty approach \cite{neely2010stochastic}. Let 
\begin{align}
B(t):=&\sum_{i=1}^NE\big[\Delta d_i(t)\big]\big(d_i(t)-D(t)\big)\notag \\
&+\epsilon E[(\sum_{i=1}^N\frac{X_i(t+1)-X_i(t)}{p_i})^2], \label{equation:definition B}
\end{align}
where $\epsilon$ is a positive number whose value can be arbitrary determined by the system designer. We then have
\begin{align}
    \Delta L(t)+\epsilon E[(\sum_{i=1}^N\frac{X_i(t+1)-X_i(t)}{p_i})^2] \leq \beta + B(t). \label{equation: drift plus penalty}
\end{align}

We aim to design an online scheduling algorithm that minimizes $B(t)$. Note that the value of $B(t)$ depends on the scheduling decisions on all time slots within the interval $t$, which consists of $\tau$ time slots. Minimizing an objective function over a finite horizon of $\tau$ time slots typically requires the usage of dynamic programming. However, we will show that there exists a simple online scheduling algorithm that minimizes $B(t)$. 

Our algorithm is called the \emph{Minimum-Drift-and-Variance-First} (MDVF) policy. Under the MDVF policy, the AP calculates the value of $r_i(t) := \epsilon \frac{1}{p_i} - d_i(t)$ at the beginning of each interval $t$. In each time slot within the interval, the AP finds the undelivered packet with the smallest $r_i(t)$ and transmits that packet, as long as there is at least one packet to be transmitted.





\begin{lemma} \label{lemma: drift and sigma expand}
The MDVF policy minimizes $B(t)$.
\end{lemma}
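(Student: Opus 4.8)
The plan is to first reduce $B(t)$ to a functional of the set of clients that get delivered during interval $t$, and then to show that serving in increasing order of $r_i(t)$ minimizes that functional. Throughout interval $t$ the deficits $d_i(t)$ and their average $D(t)$ are frozen, so writing $Y_i:=X_i(t+1)-X_i(t)\in\{0,1\}$ and letting $K:=\{i:Y_i=1\}$ be the (random) set of delivered clients, one can use $Y_i^2=Y_i$ together with the work-conserving identity (\ref{eq:x martingale}) — which makes $E[\sum_i Y_i/p_i]=\tau-I_{\{1,\dots,N\}}$ the same for every work-conserving policy — to collect all scheduling-independent terms of (\ref{equation:definition B}) and obtain
\begin{equation*}
B(t)=\mathrm{const}+E\big[\Psi(K)\big],\qquad \Psi(K):=\epsilon\Big(\sum_{i\in K}\tfrac{1}{p_i}\Big)^{2}-\sum_{i\in K}\tfrac{d_i(t)}{p_i}.
\end{equation*}
Substituting $d_i(t)=\epsilon/p_i-r_i(t)$ puts this in the convenient form $\Psi(K)=\sum_{i\in K}\frac{r_i(t)}{p_i}+2\epsilon\sum_{\{i,j\}\subseteq K}\frac{1}{p_ip_j}$. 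Since $r_i(t)$ is constant over the interval, MDVF is exactly the fixed-priority policy that serves the undelivered client of smallest $r_i(t)$, so it suffices to show this order minimizes $E[\Psi(K)]$ among work-conserving within-interval policies.

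I would prove this by an interchange argument. Couple the channels per client (client $i$ owns its own i.i.d.\ Bernoulli$(p_i)$ attempt sequence, with first-success index $G_i$). Compare two priority orders that agree everywhere except that two adjacent clients $a,b$ are swapped, with $r_a\le r_b$. The clients ahead of $a,b$ receive identical service, leaving the same number $m$ of remaining slots and the same partial delivered set; a one-line check shows $a$ and $b$ together consume $\min(G_a+G_b,m)$ slots no matter their order, so the leftover slots and hence (under the coupling) the deliveries of all clients behind them coincide. Thus the difference of the two $\Psi$-values involves only the $a$- and $b$-terms: the pair term $\frac{1}{p_ap_b}\mathbf{1}[a,b\in K]$ cancels because $\{a,b\}\subseteq K\iff G_a+G_b\le m$ in both orders, and on the ``swing'' event $\{G_a+G_b>m\}$ there are no leftover slots, so all pairwise cross-terms reduce to $S_0:=\sum_{i\in K_0}1/p_i$ over the higher-priority delivered set $K_0$, which is independent of $(G_a,G_b)$. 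Taking expectations and using $P(G_a\le m<G_a+G_b\mid m)=p_a\sum_{g=1}^{m}(1-p_a)^{g-1}(1-p_b)^{m-g}$, whose $p_a$-free factor $Q_m$ is symmetric in $a\leftrightarrow b$ and nonnegative, everything collapses to $E[\Psi^{(a\text{ first})}-\Psi^{(b\text{ first})}]=(r_a-r_b)\,E[Q_m]\le 0$. Bubble-sorting toward the increasing-$r$ order therefore never increases $E[\Psi(K)]$; a backward induction over the $\tau$ slots (using that a finite-horizon MDP admits a Markov optimal policy, and that the optimal continuation from any state is a fixed-priority order, so ``bumping'' a non-minimal client to the front is a composition of such adjacent swaps) upgrades this to optimality among all work-conserving policies.

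The step I expect to be the main obstacle is precisely where heterogeneity of the $p_i$ enters: the naive coupling that relabels $a\leftrightarrow b$ and reuses the same slot-coins fails, because a slot success for $a$ is not a slot success for $b$, so the interchange comparison does not collapse in one line as it does in the homogeneous model of \cite{HOU2016ShortTermPerformance}. The fix rests on the two structural facts above — the swing event leaves zero leftover slots (which decouples the pairwise-penalty cross-terms from $(G_a,G_b)$), and the per-client ``loser'' probabilities share a common symmetric factor — and making both align with the definition $r_i=\epsilon/p_i-d_i$ is the crux. A minor secondary point worth flagging rather than belaboring: $\Psi$ is not monotone in $K$, so the statement should be read over work-conserving policies (or one separately checks that idling is never strictly beneficial).
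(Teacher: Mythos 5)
Your reduction of $B(t)$ to $\mathrm{const}+E[\Psi(K)]$ is correct (the constant absorbs the $\bar{X}_i^*$ terms and, via (\ref{eq:x martingale}), the $D(t)\sum_i E[Y_i]/p_i$ term), and your adjacent-transposition computation is also correct and rather elegant: the leftover-slot count $\max(m-G_a-G_b,0)$ is symmetric, the swing events force $K\setminus\{a,b\}=K_0$, and the cross terms $2\epsilon S_0 Q_m$ cancel, leaving $(r_a-r_b)E[Q_m]\le 0$. This genuinely handles the heterogeneous-$p_i$ difficulty you flag, and it proves that the increasing-$r$ order is optimal \emph{within the class of fixed-priority (persistent-retry) orders}.

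The gap is in the final step that upgrades this to all work-conserving policies. In the backward induction, the induction hypothesis says MDVF is the optimal continuation from slot $s+1$; the comparison you then need at slot $s$ is between ``transmit $u'$ once at slot $s$, then run MDVF'' and ``transmit $u^*$ at slot $s$, then run MDVF.'' The first of these is \emph{not} the fixed-priority order with $u'$ bumped to the front: a bumped priority order retries $u'$ until it succeeds, whereas the one-step deviation abandons $u'$ after a single failed attempt (MDVF at slot $s+1$ transmits $u^*$, the minimum-$r$ undelivered client, whether or not $u'$ succeeded). So ``bumping a non-minimal client to the front is a composition of adjacent swaps'' compares the wrong pair of policies, and your transposition lemma does not cover the deviations the dynamic program actually generates. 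The repair is the paper's (simpler) swap: under the deviating policy the first two transmissions of the remaining horizon are deterministically $u'$ then $u^*$; exchanging these two transmissions leaves the distribution of the state at slot $s+2$ — and hence $E[\Psi(K)]$ — unchanged, and yields a policy that transmits $u^*$ at slot $s$, which by the induction hypothesis is dominated by MDVF. Once that swap is in place (together with the one-slot base case, where the marginal cost of serving $u$ is $r_u+2\epsilon\sum_{i\in V}1/p_i$, matching the paper's $\epsilon/p_u-d_u(t)+\lambda(V)$ computation), your priority-order interchange lemma becomes redundant; as written, it is a correct but non-load-bearing piece of the argument.
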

\begin{proof}
We prove this lemma by induction. First, we consider the optimal scheduling decision in the last time slot of the interval. At this time, some packets have already been delivered in the previous $\tau-1$ slots, and we use $V$ to denote the set of clients whose packets have already been delivered. As this is the last time slot of the interval, the scheduling decision of the AP only consists of choosing one client $u\notin V$ and transmitting its packet. Given $V$ and $u$, we will calculate the value of $\sum_{i=1}^NE\big[\Delta d_i(t)\big]\big(d_i(t)-D(t)\big)+\epsilon E[(\sum_{i=1}^N\frac{X_i(t+1)-X_i(t)}{p_i})^2]$.

For this chosen client $u$, its packet will be delivered, that is, $X_u(t+1)-X_u(t)=1$, with probability $p_u$, and $X_u(t+1)-X_u(t)=0$, with probability $1-p_u$. Hence, we have $E[\Delta d_u(t)] = \frac{\bar{X}_u - p_u}{p_u}$.

On the other hand, for each client $i\in V$, its packet has already been delivered. We have $X_v(t)-X_v(t-1)=1$ and $E[\Delta d_i(t)] = \frac{\bar{X}_i - 1}{p_i}$.

Finally, for each client $i\notin V\cup \{u\}$, its packet will not be delivered, and we have $X_i(t)-X_i(t-1)=0$ and $E[\Delta d_i(t)] = \frac{\bar{X}_i}{p_i}$.

We now have, given $V$ and $u$,
\begin{align}
    &\sum_{i=1}^NE\big[\Delta d_i(t)\big][d_i(t)-D(t)]\notag \\ &\quad +\epsilon E[(\sum_{i=1}^N\frac{X_i(t+1)-X_i(t)}{p_i})^2]\notag \\
    &=\frac{\bar{X}_u - p_u}{p_u}[d_u(t)-D(t)] + \sum_{i\in V}\frac{\bar{X}_i - 1}{p_i}[d_i(t)-D(t)] \notag\\
    &\quad +\sum_{i\notin V\cup \{u\}}\frac{\bar{X}_i}{p_i}[d_i(t)-D(t)]\notag\\ 
    &\quad +\epsilon [p_u\big(\sum_{i\in V}\frac{1}{p_i}+\frac{1}{p_u}\big)^2+(1-p_u)\big(\sum_{i\in V}\frac{1}{p_i}\big)^2]\notag\\
    &= \epsilon\frac{1}{p_u}-d_u(t) +\lambda(V), \label{equation: epsilon sigma expand}
\end{align}
where $\lambda(V):=D(t)+\sum_{i=1}^N \frac{\bar{X}_i}{p_i} [d_i(t) - D(t)]\\ - \sum_{i \in V} \frac{1}{p_i} [d_i(t) - D(t)]+\epsilon [\big(\sum_{i\in V}\frac{1}{p_i}\big)^2+2\big(\sum_{i\in V}\frac{1}{p_i}\big)]$ is the same regardless of the choice of $u$. Therefore, it is clear that an optimal scheduling algorithm that minimizes $B(t)$ will schedule the undelivered packet $u$ with the smallest $\epsilon\frac{1}{p_u}-d_u(t)$ in the last time slot.

Now, assume that, starting from the $(s+1)$-th time slot in an interval, scheduling the undelivered packet with the smallest $\epsilon\frac{1}{p_u}-d_u(t)$ in each of the remaining time slot is optimal. We will show that, even in the $s$-th time slot, scheduling the undelivered packet with the smallest $\epsilon\frac{1}{p_u}-d_u(t)$ is optimal.

We prove this claim by contradiction. Let $u^*$ be the undelivered packet with the smallest $\epsilon\frac{1}{p_u}-d_u(t)$ in time slot $s$. If the claim is false, then the optimal scheduling algorithm, which we denote by $\mathbb{A}$, would schedule another undelivered packet $u'\neq u^*$ in time slot $s$, and the value of $B(t)$ under $\mathbb{A}$ is strictly smaller than any policy that schedules $u^*$ in the $s$-th time slot. By the induction hypothesis, $\mathbb{A}$ begins to schedule the undelivered packet with the smallest $\epsilon\frac{1}{p_u}-d_u(t)$ starting from the $(s+1)$-th time slot. As $u^*$ is not scheduled by $\mathbb{A}$ is the $s$-th time slot, $\mathbb{A}$ needs to schedule $u^*$ in the $(s+1)$-th time slot. In summary, $\mathbb{A}$ schedules $u'$ in the $s$-th time slot, and $u^*$ in the $(s+1)$-th time slot.

Now, we can construct another algorithm $\mathbb{B}$ by simply swapping the transmissions in the $s$-th time slot and the $(s+1)$-th time slot. In other words, $\mathbb{B}$ schedules $u^*$ in the $s$-th time slot, $u'$ in the $(s+1)$-th time slot, and then follows $\mathbb{A}$ starting from the $(s+2)$-th time slot. Obviously, the value of $B(t)$ under $\mathbb{A}$ and $\mathbb{B}$ is the same, which results in a contradiction.

We have established that, even in the $s$-th time slot, scheduling the undelivered packet with the smallest $\epsilon\frac{1}{p_u}-d_u(t)$ is optimal. By induction, scheduling the undelivered packet with the smallest $\epsilon\frac{1}{p_u}-d_u(t)$ in each time slot is optimal, and MDVF minimizes $B(t)$.
\end{proof}

\section{Performance Analysis of the MDVF policy}\label{section:performance}

We now study the performance of the MDVF policy. We will demonstrate the surprising result that the MDVF policy satisfies both constraints on mean (\ref{equation:x opt solution}) and variance (\ref{equation: sigma opt solution}), and the value of $\sigma_{TOT}^2$ under the MDVF policy can be made arbitrary close to a lower bound. Throughout this section, we use $\cdot|\eta$ to denote the value of $\cdot$ under a scheduling policy $\eta$. For example, $\Delta L(t)|$MDVF denotes the value of $\Delta L(t)$ under the MDVF policy.



We first establish the following property.

\begin{theorem} \label{theorem:positive recurrent Markov}
Under the MDVF policy, the Markov process with state vector $\{d_i(t)-D(t)\}$ is positive recurrent.
\end{theorem}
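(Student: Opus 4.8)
The plan is to apply the Foster--Lyapunov drift criterion, using the same Lyapunov function $L(t)=\frac12\sum_{i=1}^N\big(d_i(t)-D(t)\big)^2$ that underlies the design of MDVF. One first notes that $\{d_i(t)-D(t)\}$ really is a Markov chain under MDVF: the MDVF priority order ranks clients by $\epsilon/p_i-d_i(t)$, and since $D(t)$ is a common additive shift, this order --- and hence every MDVF scheduling decision in the interval --- depends on the state only through $\{d_i(t)-D(t)\}$. By (\ref{equation:drift expand}), which holds for any policy, $\Delta L(t)\le\beta+\sum_{i=1}^N E[\Delta d_i(t)]\big(d_i(t)-D(t)\big)$, so it suffices to show that this linear term is bounded above by a strictly negative constant whenever the state lies outside a bounded set. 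Writing $c_i:=d_i(t)-D(t)$, so that $\sum_i c_i=0$ and $\|c\|^2=2L(t)$, the goal is a uniform bound $\sum_i E[\Delta d_i(t)]c_i|\mathrm{MDVF}\le-1$ for $L(t)$ large.

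The lever is the optimality of MDVF from Lemma~\ref{lemma: drift and sigma expand}. Since $B(t)=\sum_i E[\Delta d_i(t)]c_i+\epsilon E\big[(\sum_i\frac{X_i(t+1)-X_i(t)}{p_i})^2\big]$ and the $\epsilon$-term is nonnegative and deterministically at most $\epsilon M$ with $M:=(\sum_i 1/p_i)^2$, we get $\sum_i E[\Delta d_i(t)]c_i|\mathrm{MDVF}\le B(t)|\mathrm{MDVF}\le B(t)|\eta$ for every work-conserving policy $\eta$. It then remains to exhibit, for each state with $L(t)$ large, a work-conserving $\eta$ with $B(t)|\eta$ strongly negative. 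Comparing MDVF only to a stationary rate-matching policy (one with $\bar X_i=\bar X_i^*$) does \emph{not} suffice: all of its $E[\Delta d_i(t)]$ vanish, giving merely $B(t)|\eta\le\epsilon M$, i.e.\ a bounded but not negative drift. Instead I would use a state-dependent corrective policy. Let $i^+$ and $i^-$ attain $\max_i c_i$ and $\min_i c_i$; since $\sum_i c_i=0$, $c_{i^+}-c_{i^-}\ge\|c\|/\sqrt N$. Regard the set $\mathcal P$ of achievable expected per-interval scaled delivery profiles $\big(E[\tfrac{X_i(t+1)-X_i(t)}{p_i}|\eta]\big)_i$ as a polytope inside the hyperplane $H=\{y:\sum_i y_i=\tau-I_{\{1,\dots,N\}}\}$. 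The paper's assumption that $\sum_{i\in S}\bar X_i/p_i\le\tau-I_S$ is slack for all $S\neq\{1,\dots,N\}$, together with the mild non-degeneracy $0<\bar X_i^*<1-(1-p_i)^\tau$, puts $y^*:=(\bar X_i^*/p_i)_i$ in the interior of $\mathcal P$ relative to $H$. Hence there is a fixed $t_0>0$, the same for all of the finitely many ordered pairs $(i^+,i^-)$, and a work-conserving policy $\eta$ whose profile equals $y^*+t_0(e_{i^+}-e_{i^-})$, where $e_j$ is the $j$-th coordinate vector (the direction $e_{i^+}-e_{i^-}$ lies in $H$, so this stays achievable). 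For this $\eta$, $E[\Delta d_i(t)|\eta]=(\bar X_i^*/p_i)-\big(y^*+t_0(e_{i^+}-e_{i^-})\big)_i=-t_0(e_{i^+}-e_{i^-})_i$, so $\sum_i E[\Delta d_i(t)|\eta]c_i=-t_0(c_{i^+}-c_{i^-})\le-\tfrac{t_0}{\sqrt N}\sqrt{2L(t)}$, and therefore $B(t)|\eta\le\epsilon M-\tfrac{t_0}{\sqrt N}\sqrt{2L(t)}$.

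Combining the two estimates, $\Delta L(t)|\mathrm{MDVF}\le\beta+\epsilon M-\tfrac{t_0}{\sqrt N}\sqrt{2L(t)}$, which is at most $-1$ once $L(t)$ exceeds the explicit threshold $L_0:=\tfrac{N(\beta+\epsilon M+1)^2}{2t_0^2}$. The exceptional set $\{s:L(s)<L_0\}$ is bounded, and on it the drift is trivially finite because $\Delta d_i(t)\in[\tfrac{\bar X_i^*-1}{p_i},\tfrac{\bar X_i^*}{p_i}]$; the Foster--Lyapunov criterion then yields positive recurrence of $\{d_i(t)-D(t)\}$ under MDVF.

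The step I expect to be the main obstacle is the construction and justification of the corrective reference policy --- specifically, establishing that $y^*=(\bar X_i^*/p_i)_i$ lies in the interior of the achievable-profile polytope $\mathcal P$ relative to $H$ (this is exactly where the slackness of the subset constraints and the feasibility/non-degeneracy of $\bar X^*$ are used), and that a single $t_0>0$ works for all states so that the drift bound is uniform. A subsidiary but essential observation, noted above, is that comparing MDVF to a stationary rate-matching policy produces only bounded (not negative) drift, which is precisely why the comparison policy must actively steer service toward the currently most-deficient client $i^+$.
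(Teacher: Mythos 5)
Your proposal is correct and follows the same skeleton as the paper's proof --- Foster--Lyapunov with the same $L(t)$, the drift bound (\ref{equation:drift expand}), and the $B(t)$-minimality of MDVF from Lemma~\ref{lemma: drift and sigma expand} used to compare against a state-dependent reference policy --- but the reference policy itself is genuinely different. The paper compares MDVF to the largest-deficit-first policy $\Omega$ and, sorting clients so that $d_1(t)\geq\dots\geq d_N(t)$, invokes the prefix identities $\sum_{i=1}^{j}E[\Delta d_i(t)]|\Omega=\sum_{i=1}^{j}\frac{\bar{X}_i^*}{p_i}-(\tau-I_{\{1,\dots,j\}})\leq-\delta$ for $j<N$ (with equality to $0$ at $j=N$); an Abel summation then yields $\sum_i E[\Delta d_i(t)]\big(d_i(t)-D(t)\big)|\Omega\leq-\delta\big(d_1(t)-d_N(t)\big)$, which is exactly your $-\delta(c_{i^+}-c_{i^-})$, so both routes end at the same drift inequality with the slackness of the subset constraints supplying the constant ($\delta$ there, $t_0$ here). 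The trade-off: the paper's route needs only the equation-level idle-time identities for work-conserving priority policies from \cite{hou2009theory} and avoids any polytope argument, whereas your route requires the (standard but stronger) fact that every point of the achievable-profile polytope --- in particular $y^*+t_0(e_{i^+}-e_{i^-})$ --- is realizable by a randomized in-interval policy, which is precisely the step you flagged as the main obstacle. In exchange, your write-up makes explicit two things the paper leaves implicit: that $\{d_i(t)-D(t)\}$ is itself Markov under MDVF because the priority order $\epsilon/p_i-d_i(t)$ is invariant under the common shift $D(t)$, and that comparing to a stationary rate-matching policy with $\bar{X}_i=\bar{X}_i^*$ gives only a bounded, not negative, drift --- which is why the comparison policy must be state-dependent in both proofs.
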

\begin{proof}
We prove this theorem by establishing an upper bound of $\Delta L(t)|$MDVF. To simplify notations, we let $\Omega$ be the policy that schedules the undelivered packet with the maximum value of $d_i(t)$. We also sort all clients such that $d_1(t)\geq d_2(t)\geq\dots \geq d_N(t)$. Then $\Omega$ will only transmit a packet for client $i$ if, for each $j<i$, the packet for flow $j$ has already been delivered. This is equivalent to the largest-debt-first policy in \cite{hou2009theory}, and we have, for all $1\leq j\leq N$:
\begin{align}
\sum_{i=1}^jE[\Delta d_i(t)]|\Omega=&\sum_{i=1}^{j}\frac{\bar{X}_i^*}{p_i} - E[\sum_{i=1}^j\frac{X_i(t+1)-X_i(t)}{p_i}]|\Omega\notag\\
=&\sum_{i=1}^{j}\frac{\bar{X}_i^*}{p_i}-(\tau-I_{\{1,2,\dots,j\}}).
\end{align}
By (\ref{equation:constraint x}), we have $\sum_{i=1}^NE[\Delta d_i(t)]|\Omega=0$. Further, as we assume that (\ref{equation:loose constraint x}) is not tight when $S\neq \{1,2,\dots,N\}$, there exists a positive number $\delta>0$ such that $\sum_{i=1}^jE[\Delta d_i(t)]|\Omega \leq-\delta$ for all $1\leq j\leq N-1$. We now have
\begin{align}
    & \sum_{i=1}^NE[\Delta d_i(t)] \big(d_i(t)-D(t)\big)|\Omega \notag\\
    =& \sum_{i=1}^NE[\Delta d_i(t)] \big(d_i(t)-d_{i+1}(t)+d_{i+1}(t)\notag \\
    & -d_{i+2}(t)+\dots    -d_{N}(t)+d_{N}(t)-D(t)\big)|\Omega \notag\\
    =& \sum_{i=1}^NE[\Delta d_i(t)] \big(d_{N}(t)-D(t)\big)|\Omega \notag \\ 
    & + \sum_{i=1}^j{\sum_{j=1}^{N-1}E[\Delta d_i(t)] \big(d_j(t)-d_{j+1}(t)\big)}|\Omega \notag\\
     \leq& -\delta \sum_{j=1}^{N-1}\big(d_j(t)-d_{j+1}(t)\big) 
    = - \delta \big(d_{1}(t)-d_{N}(t)\big).\label{equation:omega policy}
\end{align}

Next, we study $\Delta L(t)|$MDVF. By Lemma \ref{lemma: drift and sigma expand}, the MDVF policy minimizes $B(t)$. Hence, we have
\begin{align}
    &\Delta L(t)|\mbox{MDVF}+\epsilon E[(\sum_{i=1}^N\frac{X_i(t+1)-X_i(t)}{p_i})^2]|\mbox{MDVF}  \notag\\
    \leq& \beta + B(t)|\mbox{MDVF} \hspace{30pt}\mbox{\big(By (\ref{equation: drift plus penalty}) \big)} \notag\\ 
    \leq& \beta + B(t)|\Omega \notag\\  
     \leq& \beta +\epsilon E[(\sum_{i=1}^N\frac{X_i(t+1)-X_i(t)}{p_i})^2]|\Omega  \notag \\
    &- \delta\big(d_{1}(t)-d_{N}(t)\big) \hspace{30pt}\mbox{\big(By (\ref{equation:definition B}) and (\ref{equation:omega policy})\big)}
\end{align} 
Since $0\leq X_i(t+1)-X_i(t)\leq 1$, there exists some constant $M$ such that
\begin{align}
    \Delta L(t)|\mbox{MDVF} &\leq -  \delta\big(d_{1}(t)-d_{N}(t)\big)+ M. \label{equation:lyapunov result with beta and M}
\end{align}

Recall that we have sorted all clients such that $d_1(t)\\ \geq d_2(t)\geq \dots$. Hence, $\big(d_1(t)-d_N(t)\big)\geq0$ and  $\big(d_1(t)\\-d_N(t)\big)\geq|d_i(t)-D(t)|$, for all $i$. We have
\begin{align*}
    &\Delta L(t)|\mbox{MDVF} < -\delta, \mbox{if $|d_i(t)-D(t)|>\frac{M}{\delta}+1$, for some $i$},
\end{align*}
and
\begin{align}
    {\Delta L(t)|\mbox{MDVF}} \leq M, \mbox{ otherwise.}
\end{align}
By the Foster-Lyapunov Theorem, the Markov process with state vector $\{d_i(t)-D(t)\}$ is positive recurrent.
\end{proof}

Now we are able to show that the MDVF policy satisfies both constraints (\ref{equation:x opt solution}) and (\ref{equation: sigma opt solution}).
\begin{corollary}
$\bar{X}_i|\mbox{MDVF}=\bar{X}_i^*$ and $\sigma_i|\mbox{MDVF}=\\ \frac{\sigma_{TOT}|MDVF}{N}p_i$, $\forall i$.
\end{corollary}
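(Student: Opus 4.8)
The plan is to derive both equalities from the positive recurrence established in Theorem~\ref{theorem:positive recurrent Markov}, combined with the martingale structure of the aggregate delivery process. Since MDVF is work-conserving, (\ref{eq:x martingale}) applies, so $M_t := \sum_{i=1}^N \frac{X_i(t)}{p_i} - t(\tau - I_{\{1,\dots,N\}})$ is a zero-mean martingale with bounded increments. By the martingale strong law, $\frac{1}{t}\sum_{i=1}^N \frac{X_i(t)}{p_i} \to \tau - I_{\{1,\dots,N\}}$ almost surely, and by the martingale CLT, $\frac{M_t}{\sqrt t} \to \hat X_{TOT}$ in distribution, where $\hat X_{TOT}$ is the zero-mean Gaussian with variance $\sigma_{TOT}^2|$MDVF as defined in (\ref{equation:second objective}). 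A one-line computation with $\bar X_i^* = (\frac{\tau - I_{\{1,\dots,N\}}}{N} - \sum_j \frac{q_j}{Np_j} + \frac{q_i}{p_i})p_i$ gives $\sum_{i=1}^N \frac{\bar X_i^*}{p_i} = \tau - I_{\{1,\dots,N\}}$, hence $N D(t) = \sum_{i=1}^N d_i(t) = t\sum_{i=1}^N \frac{\bar X_i^*}{p_i} - \sum_{i=1}^N \frac{X_i(t)}{p_i} = -M_t$.

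\emph{Mean constraint.} By Theorem~\ref{theorem:positive recurrent Markov} the chain $\{d_i(t) - D(t)\}$ is positive recurrent, hence tight, so $\frac{d_i(t) - D(t)}{t} \to 0$ in probability. Combining this with $\frac{D(t)}{t} = -\frac{M_t}{Nt} \to 0$ yields $\frac{d_i(t)}{t} \to 0$; as observed immediately after the definition of the deficit, this is equivalent to $\bar X_i|$MDVF $= \bar X_i^*$.

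\emph{Variance constraint.} Because $\bar X_i = \bar X_i^*$, we may write $d_i(t) = -\frac{X_i(t) - t\bar X_i}{p_i}$, so $\frac{d_i(t)}{\sqrt t} \to -\frac{\hat X_i}{p_i}$ in distribution, where $\hat X_i$ is the zero-mean Gaussian limit with variance $\sigma_i^2|$MDVF introduced in Section~\ref{section:formulation}. On the other hand $\frac{D(t)}{\sqrt t} = -\frac{M_t}{N\sqrt t} \to -\frac{\hat X_{TOT}}{N}$ in distribution, while tightness of $\{d_i(t) - D(t)\}$ forces $\frac{d_i(t) - D(t)}{\sqrt t} \to 0$ in probability; Slutsky's theorem then gives $\frac{d_i(t)}{\sqrt t} \to -\frac{\hat X_{TOT}}{N}$ in distribution. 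Uniqueness of the limiting law yields $\frac{\hat X_i}{p_i} \stackrel{d}{=} \frac{\hat X_{TOT}}{N}$, and equating the variances of these two zero-mean Gaussians gives $\frac{\sigma_i^2}{p_i^2} = \frac{\sigma_{TOT}^2}{N^2}$; taking nonnegative square roots gives $\sigma_i|$MDVF $= \frac{\sigma_{TOT}|\text{MDVF}}{N} p_i$.

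The delicate point is the passage from positive recurrence to $\frac{d_i(t) - D(t)}{\sqrt t} \to 0$: positive recurrence supplies only tightness of $d_i(t) - D(t)$, but that is exactly what is needed, since dividing a tight sequence by $\sqrt t$ sends it to $0$ in probability, so no second-moment bound on the stationary law is required. One must, however, be careful to compare the limiting \emph{distributions} via Slutsky rather than trying to push variances through a convergence-in-distribution statement; this is why the argument first identifies the laws of $\hat X_i/p_i$ and $\hat X_{TOT}/N$ and only then reads off their variances. It is also worth noting explicitly that the $\hat X_{TOT}$ appearing here has variance equal to $\sigma_{TOT}^2|$MDVF as defined in (\ref{equation:second objective}), so both occurrences of $\sigma_{TOT}$ in the corollary refer to the same policy-dependent quantity.
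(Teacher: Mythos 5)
Your proposal is correct and follows essentially the same route as the paper's own proof: show $D(t)/t\to 0$ from the work-conserving identity, use the positive recurrence of $\{d_i(t)-D(t)\}$ from Theorem~\ref{theorem:positive recurrent Markov} to kill $d_i(t)-D(t)$ under both the $t$ and $\sqrt{t}$ scalings, and then identify the limits $-\hat{X}_i/p_i$ and $-\hat{X}_{TOT}/N$. Your version is in fact somewhat more careful than the paper's, since you make explicit the tightness argument for $\frac{d_i(t)-D(t)}{\sqrt{t}}\to 0$ and invoke Slutsky to equate the limiting laws before reading off variances, two steps the paper asserts without justification.
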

\begin{proof}
Recall that $d_i(t):=\frac{\bar{X}_i^* t}{p_i}-\frac{X_i(t)}{p_i}$ and $D(t) :=\\ \frac{\sum_{i=1}^N d_i(t)}{N}$. By (\ref{equation:constraint x}), we have:
\begin{align}
     &\lim_{\mathbb{T} \to \infty} \frac{D(\mathbb{T})|\mbox{MDVF}}{\mathbb{T}} = \lim_{\mathbb{T} \to \infty}\frac{\sum_{i=1}^N d_i(\mathbb{T})|\mbox{MDVF}}{N \mathbb{T}}\notag \\
     =& \frac{1}{N} \sum_{i=1}^N \lim_{\mathbb{T} \to \infty} \frac{ \mathbb{T}\bar{X}_i^*-X_i(\mathbb{T})|\mbox{MDVF}}{p_i \mathbb{T}}\notag \\
     =&  \frac{1}{N} \sum_{i=1}^N \frac{\bar{X}_i^*}{p_i}-\frac{1}{N}\sum_{i=1}^N\frac{\bar{X}_i(\mathbb{T})|\mbox{MDVF}}{p_i} \notag \\
     =&\frac{\tau-I_{\{1,2,\dots N\}}}{N}-\frac{\tau-I_{\{1,2,\dots N\}}}{N}=0. \label{equation:lim D over T} 
\end{align}

By Theorem~\ref{theorem:positive recurrent Markov}, the vector $\{d_i(t)-D(t)\}|$MDVF converges to a steady state distribution as $t\to\infty$. Hence, both $\lim_{\mathbb{T} \to \infty}{\frac{d_i(\mathbb{T})-D(\mathbb{T})}{{\mathbb{T}}}}|$MDVF and $\lim_{\mathbb{T} \to \infty}{\frac{d_i(\mathbb{T})-D(\mathbb{T})}{\sqrt{\mathbb{T}}}}|$MDVF converge to $0$ in probability. We then have
\begin{align}
    &\lim_{\mathbb{T} \to \infty} \frac{d_i(\mathbb{T})|\mbox{MDVF}}{\mathbb{T}} =\frac{\bar{X}_i^*}{p_i}-\frac{\bar{X}_i|\mbox{MDVF}}{p_i}\notag \\ 
    =&\lim_{\mathbb{T} \to \infty} \frac{D(\mathbb{T})|\mbox{MDVF}}{\mathbb{T}} = 0,
\end{align}and hence $\bar{X}_i|$MDVF$=\bar{X}_i^*  $.

Next, we study $\sigma_i|$MDVF. Recall that $\sigma_i^2$ is the variance of\\ $\hat{X}_i:=\lim_{\mathbb{T} \to \infty}{\frac{X_i(\mathbb{T})-\mathbb{T}\bar{X}_i}{\sqrt{\mathbb{T}}}}$. We then have:
\begin{align}
    \lim_{\mathbb{T} \to \infty}{\frac{d_i(\mathbb{T})|\mbox{MDVF}}{\sqrt{\mathbb{T}}}} &= \lim_{\mathbb{T} \to \infty}{\frac{\mathbb{T}\bar{X}_i^*-X_i(\mathbb{T})|\mbox{MDVF}}{p_i\sqrt{\mathbb{T}}}}\notag \\
    &= -\frac{\Hat{X}_i|\mbox{MDVF}}{p_i} ,\notag
\end{align} since $\bar{X}_i|$MDVF$=\bar{X}_i^*$. This shows that the variance of \\$\lim_{\mathbb{T} \to \infty}{\frac{d_i(\mathbb{T})|\mbox{MDVF}}{\sqrt{\mathbb{T}}}}$ is $\frac{\sigma_i^2|\mbox{MDVF}}{p_i^2}$.

Also, recall that $\sigma_{TOT}^2$ is the variance of $\hat{X}_{TOT}=\\ \sum_{i=1}^N\frac{\hat{X}_i}{p_i}$. We have
\begin{align}
    &\lim_{\mathbb{T} \to \infty}{\frac{D(\mathbb{T})|\mbox{MDVF}}{\sqrt{\mathbb{T}}}}=\lim_{\mathbb{T} \to \infty}{\frac{\sum_{i=1}^N{d_i(\mathbb{T})|\mbox{MDVF}}}{N\sqrt{\mathbb{T}}}}\notag \\
    =& \lim_{\mathbb{T} \to \infty}{\sum_{i=1}^N\frac{\mathbb{T}\bar{X}_i^*
    -X_i(\mathbb{T})|\mbox{MDVF}}{Np_i \sqrt{\mathbb{T}}}}
    = -\sum_{i=1}^N\frac{\Hat{X}_i|\mbox{MDVF}}{Np_i},\notag
\end{align}
and the variance of $\lim_{\mathbb{T} \to \infty}{\frac{D(\mathbb{T})|\mbox{MDVF}}{\sqrt{\mathbb{T}}}}$ is $\frac{\sigma_{TOT}^2|\mbox{MDVF}}{N^2}$. As $\lim_{\mathbb{T} \to \infty}{\frac{d_i(\mathbb{T})-D(\mathbb{T})}{\sqrt{\mathbb{T}}}}|$MDVF converges to $0$ in probability, we have $\sigma_i|\mbox{MDVF}=\frac{\sigma_{TOT}|\mbox{MDVF}}{N}p_i$.
\end{proof}

We have shown that the MDVF policy satisfies both constraints (\ref{equation:x opt solution}) and (\ref{equation: sigma opt solution}). We now show that the value of $\sigma_{TOT}^2|\mbox{MDVF}$ can be made arbitrarily close to a theoretical lower bound.

Consider the problem (\ref{equation:third objective}) -- (\ref{equation:third constraint for x}), which ignores the constraint on variance (\ref{equation: sigma opt solution}). Since this problem only involves a constraint on mean, there exists a stationary randomized policy that is optimal, which we denote by $\omega$. Obviously, $\sigma_{TOT}^2|\omega$ is a lower bound of the problem (\ref{equation:second objective}) -- (\ref{equation: sigma opt solution}). We have the following theorem.
\begin{theorem} \label{theorem:sigma approaching}
$\sigma_{TOT}^2|\mbox{MDVF}\leq \sigma_{TOT}^2|\omega + \frac{\beta}{\epsilon}$.
\end{theorem}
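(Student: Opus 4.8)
The plan is to run the standard telescoping Drift-Plus-Penalty argument: use Lemma~\ref{lemma: drift and sigma expand} to benchmark MDVF against the stationary randomized policy $\omega$, and then convert the resulting time-average bound into a statement about $\sigma_{TOT}^2|\mbox{MDVF}$ via the positive recurrence from Theorem~\ref{theorem:positive recurrent Markov}. First I would fix an interval $t$ and condition on the deficit state $\{d_i(t)\}$ generated by MDVF. Combining the Drift-Plus-Penalty bound (\ref{equation: drift plus penalty}) with Lemma~\ref{lemma: drift and sigma expand}, which states that MDVF minimizes $B(t)$ over all within-interval schedules, yields
\begin{align*}
&\Delta L(t)|\mbox{MDVF}+\epsilon E[(\sum_{i=1}^N\frac{X_i(t+1)-X_i(t)}{p_i})^2]|\mbox{MDVF}\\
&\quad\leq\beta+B(t)|\mbox{MDVF}\leq\beta+B(t)|\omega,
\end{align*}
where $B(t)|\omega$ denotes the value of (\ref{equation:definition B}) obtained when $\omega$ is applied during interval $t$ on the MDVF state.

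The crux is to simplify $B(t)|\omega$. Since $\omega$ is a \emph{stationary} randomized policy (its decisions do not depend on the deficits) that is optimal for (\ref{equation:third objective})--(\ref{equation:third constraint for x}), its per-interval probability of delivering to client $i$ is fixed and, by the law of large numbers, equals $\bar{X}_i|\omega=\bar{X}_i^*$. Therefore $E[\Delta d_i(t)]|\omega=\frac{\bar{X}_i^*}{p_i}-\frac{\bar{X}_i^*}{p_i}=0$ for every $i$, so the first sum in (\ref{equation:definition B}) vanishes identically, independently of $\{d_i(t)\}$, and $B(t)|\omega=\epsilon E[(\sum_{i=1}^N\frac{X_i(t+1)-X_i(t)}{p_i})^2]|\omega$. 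Moreover, $E[\sum_{i=1}^N\frac{X_i(t+1)-X_i(t)}{p_i}]|\omega=\sum_{i=1}^N\frac{\bar{X}_i^*}{p_i}=\tau-I_{\{1,2,\dots N\}}$ by (\ref{equation:constraint x}), so by the definition of $\sigma_{TOT}^2$ in (\ref{equation:second objective}) this expectation is $t$-independent and equals $\sigma_{TOT}^2|\omega+(\tau-I_{\{1,2,\dots N\}})^2$; hence $B(t)|\omega=\epsilon\big(\sigma_{TOT}^2|\omega+(\tau-I_{\{1,2,\dots N\}})^2\big)$.

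Substituting back, summing over $t=1,\dots,\mathbb{T}$, telescoping the Lyapunov terms (which contributes $E[L(\mathbb{T}+1)]-L(1)\geq-L(1)$ since $L\geq0$), dividing by $\epsilon\mathbb{T}$, and letting $\mathbb{T}\to\infty$ so that $L(1)/(\epsilon\mathbb{T})\to0$ gives
\begin{align*}
&\lim_{\mathbb{T}\to\infty}\frac{1}{\mathbb{T}}\sum_{t=1}^{\mathbb{T}}E[(\sum_{i=1}^N\frac{X_i(t+1)-X_i(t)}{p_i})^2]|\mbox{MDVF}\\
&\quad\leq\sigma_{TOT}^2|\omega+(\tau-I_{\{1,2,\dots N\}})^2+\frac{\beta}{\epsilon}.
\end{align*}
By Theorem~\ref{theorem:positive recurrent Markov} the chain $\{d_i(t)-D(t)\}|\mbox{MDVF}$ is positive recurrent, hence ergodic, so the time average on the left converges to its steady-state expectation, which by (\ref{equation:second objective}) equals $\sigma_{TOT}^2|\mbox{MDVF}+(\tau-I_{\{1,2,\dots N\}})^2$. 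Cancelling the common $(\tau-I_{\{1,2,\dots N\}})^2$ then yields $\sigma_{TOT}^2|\mbox{MDVF}\leq\sigma_{TOT}^2|\omega+\frac{\beta}{\epsilon}$.

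The step I expect to be the main obstacle is the simplification of $B(t)|\omega$: it requires a careful argument that $\omega$ chooses its transmissions without looking at the deficits (so that $E[\Delta d_i(t)\mid\{d_i(t)\}]|\omega=E[\Delta d_i(t)]|\omega$) and that $\omega$ attains the target rate with per-interval delivery probability \emph{exactly} $\bar{X}_i^*$ (so that this expectation is $0$, killing the cross term even though $\{d_i(t)\}$ is the MDVF state, not $\omega$'s). The former is the defining property of a stationary randomized policy and the latter follows from its optimality in (\ref{equation:third objective})--(\ref{equation:third constraint for x}) together with stationarity, but both should be spelled out, along with the identification $E[(\sum_i\frac{X_i(t+1)-X_i(t)}{p_i})^2]|\omega=\sigma_{TOT}^2|\omega+(\tau-I_{\{1,2,\dots N\}})^2$. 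The remaining steps---the telescoping and the interchange of the long-run time average with the steady-state expectation---are routine given the positive recurrence already proved.
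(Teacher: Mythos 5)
Your proposal is correct and follows essentially the same route as the paper's proof: benchmark $B(t)|\mbox{MDVF}$ against $B(t)|\omega$ using the fact that $E[\Delta d_i(t)]|\omega=0$ for the stationary optimal policy, then telescope the Lyapunov terms and invoke positive recurrence. Your handling of the additive constant $(\tau-I_{\{1,2,\dots N\}})^2$ is in fact slightly more careful than the paper's, which writes $B(t)|\omega=\epsilon\sigma_{TOT}^2|\omega$ directly and implicitly lets that constant cancel on both sides.
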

\begin{proof}
Since $\omega$ is a stationary randomized policy that satisfies (\ref{equation:third constraint for x}), we have $E[\Delta d_i(t)]|\omega = 0$, for all $i$ and $t$. By (\ref{equation:definition B}), we have
\begin{align*}
B(t)|\omega=\epsilon E[(\sum_{i=1}^N\frac{X_i(t+1)-X_i(t)}{p_i})^2]|\omega=\epsilon\sigma_{TOT}^2|\omega.
\end{align*}
Now, recall that the MDVF policy minimizes $B(t)$. Hence, for every $t$, we have
\begin{align*}
    &{\Delta L(t)|\mbox{MDVF}} +\epsilon E[(\sum_{i=1}^N\frac{X_i(t)-X_i(t-1)}{p_i})^2]|\mbox{MDVF} \notag \\ 
    \leq& B(t)|\mbox{MDVF} + \beta \notag \\
    \leq& B(t)|\omega + \beta =\epsilon\sigma_{TOT}^2|\omega+\beta\notag.
\end{align*}
Summing the above inequality over $t=1$ to $t=\mathbb{T}$, and then divide both sides by $\mathbb{T}$ yields
\begin{align}
    &\frac{E[L(\mathbb{T}+1)]-E[L(0)]}{\mathbb{T}}|\mbox{MDVF} +\epsilon\sigma_{TOT}^2|\mbox{MDVF}\notag\\
    \leq& \epsilon\sigma_{TOT}^2|\omega+\beta.
\end{align}
By Theorem~\ref{theorem:positive recurrent Markov}, we have $\lim_{\mathbb{T}\to\infty}\frac{E[L(\mathbb{T}+1)]-E[L(0)]}{\mathbb{T}}|$MDVF\\$=0$, and hence $\sigma_{TOT}^2|\mbox{MDVF}\leq \sigma_{TOT}^2|\omega + \frac{\beta}{\epsilon}$.
\end{proof}

We note that Theorem~\ref{theorem:sigma approaching} holds for all $\epsilon$, which is a constant that can be arbitrarily chosen by the system designer. By choosing a large $\epsilon$, one can make $\sigma_{TOT}^2|\mbox{MDVF}$ arbitrarily close to the lower bound $\sigma_{TOT}^2|\omega$.

\section{Simulation Results} \label{sec:simulation}

We present our simulation results in this section. We have implemented and tested our policy and two other state-of-the-art policies in ns-2. All simulations are conducted using the 802.11 MAC protocol with 54Mbps data rate. Simulations show that the time needed to transmit a packet and to receive an ACK is about $0.5ms$. The duration of an interval is chosen to be $10ms$, or, equivalently, 20 time slots. The LoC function is chosen to be $C(\theta)=\theta^2$ when $\theta>0$. All results presented in this section are the average of 1000 runs.


We compare our MDVF policy against two other policies. The first policy is the largest debt first (LDF) policy in \cite{hou2009theory, hou2014scheduling}. In each interval $t$, the LDF policy sorts all clients in descending order of $q_it-X_i(t)$, and transmit packets according to this ordering. It has been shown that LDF guarantees to deliver a long-term average timely-throughput of $q_i$ to each client $i$, as long as it is feasible to do so. The second policy is a Max-Weight type of policy that aims to reduce the total age-of-information (AoI) in the network while guaranteeing some average timely-throughput policy \cite{kadota2018optimizing}. We call this policy MW-AoI. Although the problem of minimizing AoI remains an open problem, it has been shown that the MW-AoI policy is 4-optimal in terms of AoI.

As for the network topology, we consider two different settings. In the first setting, there are 12 wireless clients. The channel reliability of client $i$ is set to be $p_i=0.9-0.05i$. We set $q_i=0.85$ for the first 6 clients and $q_i=0.75$ for the last 6 clients. We call this setting the \emph{high-timely-throughput system}. In the second setting, there are 18 clients with $p_i=1-0.05i$. We set $q_i=0.5$ for the first 9 clients and $q_i=0.35$ for the last 9 clients. We call this setting the \emph{low-timely-throughput system.}

For each simulation run, we record the total LoC incurred in the past second. Simulation results of the two systems are shown in Fig.~\ref{fig:simulation:P}. Simulation results clearly show that our MDVF policy achieves the smallest LoC for both systems. A very surprising result is that the MW-AoI policy has the highest LoC. The reason is that the MW-AoI policy focuses on optimizing AoI, which only depends on the time of the most recent packet delivery. However, most estimation techniques require more than the most recent data to make an accurate estimation. Even basic techniques like linear extrapolation needs at least two data points to make an estimate. This simulation result highlights that AoI may fail to completely capture the accuracy of estimation. On the other hand, the LDF policy only aims to optimize the long-term average timely-throughputs and ignores temporal variance. This leads it to also have suboptimal total LoC.

\begin{figure}[htbp]
\centering
\subfigure[The high-timely-throughput system]{
\label{fig:HF-P} 
\includegraphics[width=2.5in]{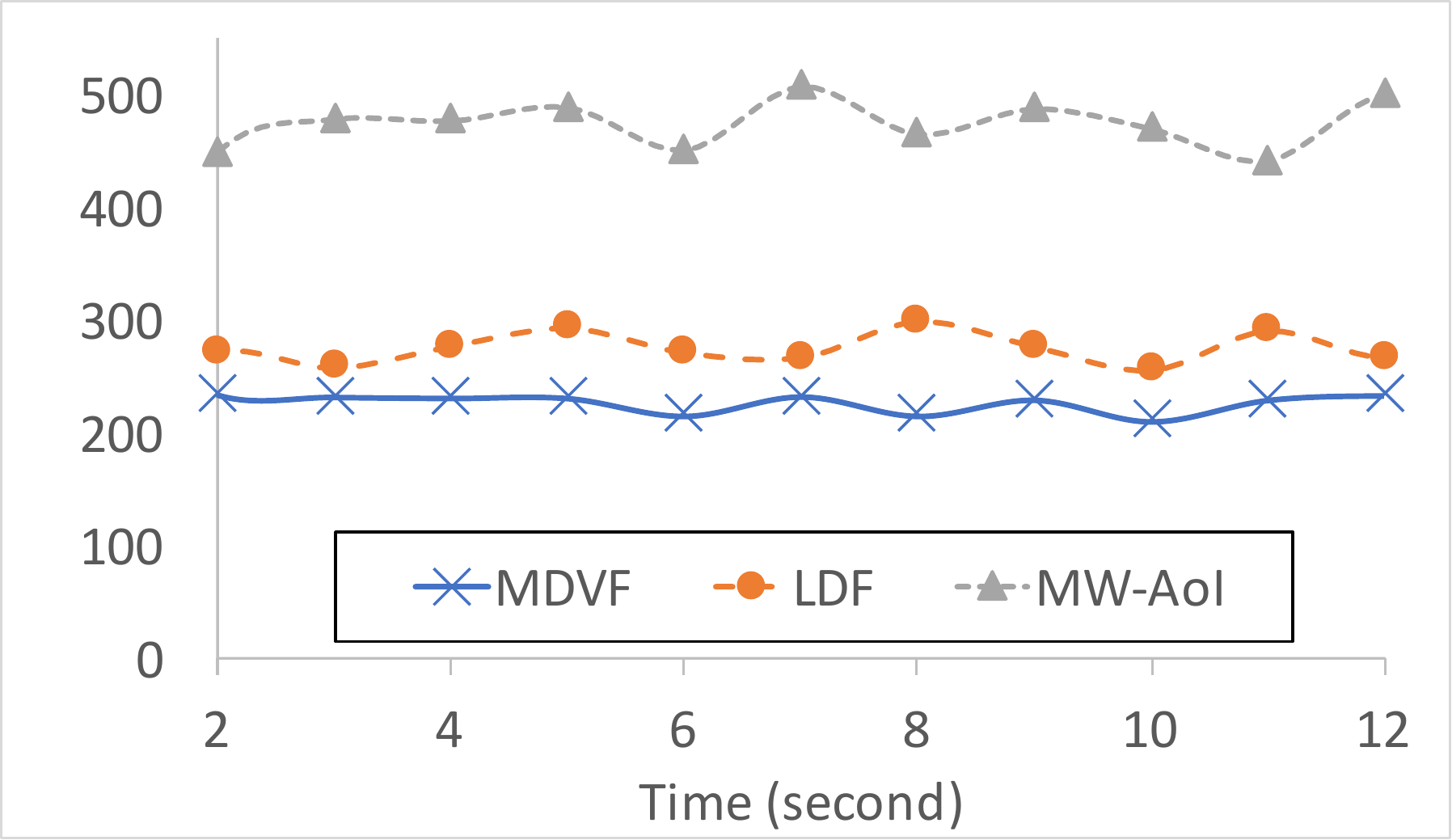}}
\hfill
\hspace{0.01\linewidth} 
\subfigure[The low-timely-throughput system]{
\label{fig:LF-P} 
\includegraphics[width=2.5in]{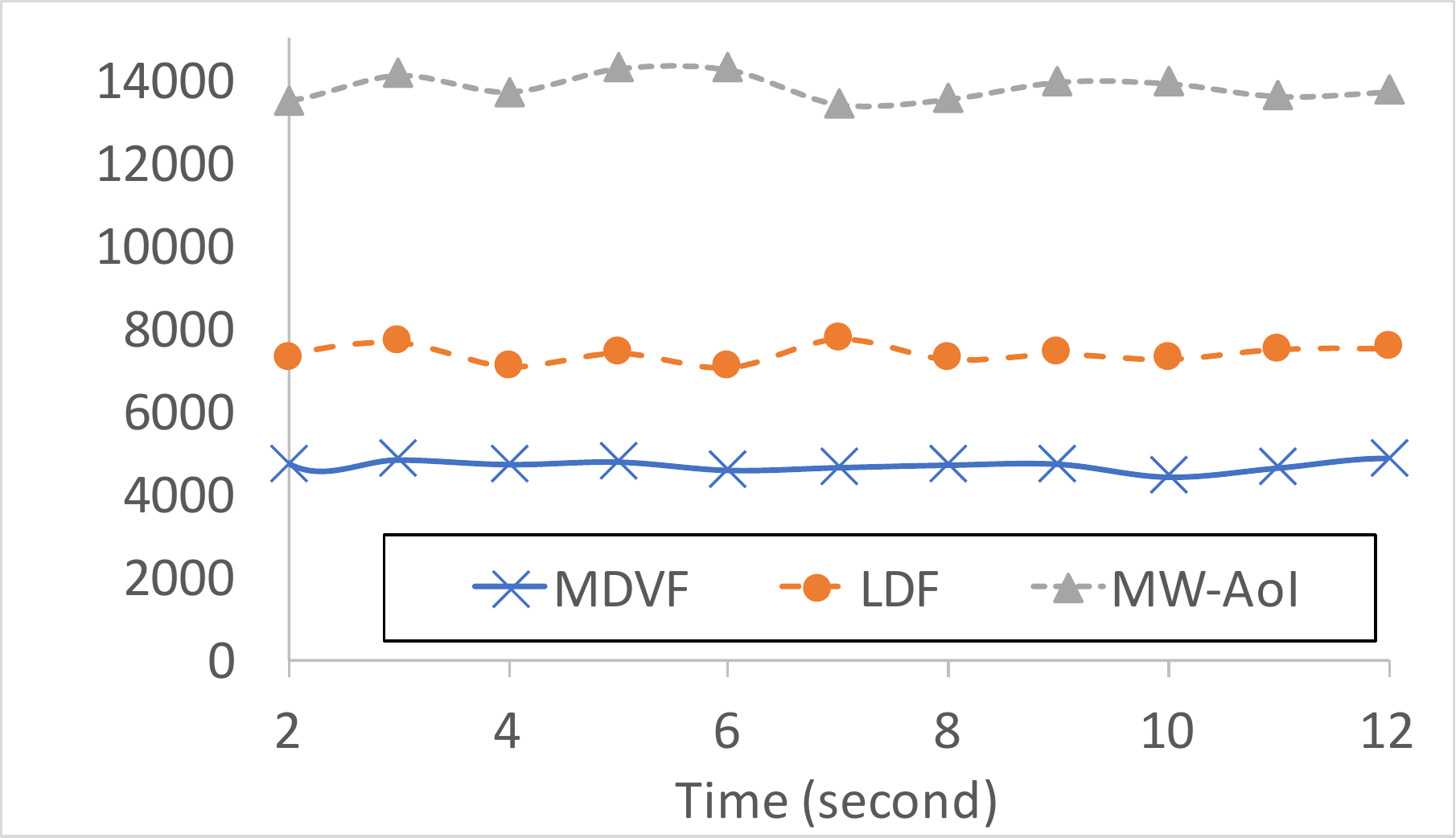}}
\caption{The total LoC in the past second.}\label{fig:simulation:P}
\end{figure}

\section{Related Work} \label{section: related work}

Real-time wireless networks have gained a lot of research interests. Hou, Borkar, and Kumar \cite{hou2009theory} have proposed a frame-based model to describe delay requirements of real-time flows. Under this model, the performance of each flow is determined by its timely-throughput, which is the long-term average number of timely deliveries. Jaramillo, Srikand, and Ying \cite{Jaramillo2011QOSinRealTimeTraffic} have studied wireless flows with heterogeneous delay and timely-throughput requirements. Kang et. al.\cite{kang2013LDFperformanceOnRealTime} have studied the performance of timely-throughputs in ad hoc wireless networks with stochastic packet arrivals. Meko and Seid \cite{Meko2015PacketLostOPT} have proposed a randomized scheduling algorithm for real-time flows. Zhang et. al. \cite{G.Zhang2015TimelyThruputHCN} have studied timely-throughputs in heterogeneous cellular networks with mobile nodes. Lashgari and Avestimehr \cite{Lashgari2013HeterNetworkTotalThruput} have looked for the additive gap of maximal timely throughput in a relaxed problem under the time-varying channel states. However, all these studies focus on the long-term average timely-throughput of each flow. As demonstrated in this paper, the temporal variance of timely-throughput can have significant impact on the credibility of an information flow. Singh, Hou, and Kumar \cite{singh2014fluctuation} have studied the fluctuation of timely-throughput, but its results only hold for a limiting scaled workloads. Hou \cite{HOU2016ShortTermPerformance} has proposed a scheduling policy to optimize the short-term performance of real-time flows, but the policy only applies to wireless networks where all links have the same quality.



Age-of-information (AoI) is another metric that aims to capture the short-term performance of information flows that has gained a lot of research interests \cite{Kaul2011AoIinVehicular, Modiano2017AoIinMHopsNetwork, kadota2018optimizing, Modiano2018AoIBroadcast, Modiano2018PerfectChannelInfoAoIOptimal, He2016AoILinkScheduling}. AoI is defined to capture the time of the most recent packet delivery. As shown in this paper, it may not be sufficient to capture the accuracy of estimation algorithms, which typically need multiple data points to make an estimation.

\section{Conclusion} \label{section: conclusion}

We have studied the problem of minimizing the total Loss-of-Credibility (LoC) in real-time wireless networks, where the LoC of each flow only depends on the timely deliveries in a window of the recent past. We have shown that, unlike most existing network utility maximization (NUM) problem, the problem of minimizing total LoC requires the precise control of the temporal variance of timely deliveries. To solve this problem, we have proposed a simple online algorithm called the MDVF policy, and have proved that the MDVF policy is near-optimal. Simulation results have further demonstrated that the MDVF policy outperforms other state-of-the-art policies.

\bibliographystyle{IEEEtran}
\bibliography{reference}
\end{document}